\newtheorem{theorem}{Theorem}
\theoremstyle{definition}
\DeclareMathOperator{\E}{\mathbb{E}}
\let\originalleft\left
\let\originalright\right
\renewcommand{\left}{\mathopen{}\mathclose\bgroup\originalleft}
\renewcommand{\right}{\aftergroup\egroup\originalright}
\newcolumntype{C}[1]{>{\centering\arraybackslash}m{#1}}
\newlength{\dhatheight}
\title{\bf Exact-corrected confidence interval for risk difference in noninferiority binomial trials} 
  \author{Nour Hawila  \& Arthur Berg\footnote{Corresponding author: berg@psu.edu}\\
    Division of Biostatistics \& Bioinformatics, Penn State University}
\date{}
\begin{document}

\maketitle

\begin{abstract}
A novel confidence interval estimator is proposed for the risk difference in noninferiority binomial trials. The confidence interval is consistent with an exact unconditional test that preserves the type-I error, and has improved power, particularly for smaller sample sizes, compared to the confidence interval by \cite{Chan:99}.  The improved performance of the proposed confidence interval is theoretically justified and demonstrated with simulations and examples. An R package is also distributed that implements the proposed methods along with other confidence interval estimators. 

\end{abstract}

\section{Introduction}

We consider a noninferiority trial with binary outcome and risk difference as the treatment effect. The noninferiority trial design incorporates a noninferioiry margin, $\delta_0$, and generalizes the standard comparative binomial trial corresponding to $\delta_0=0$.  In \cite{Chan:98}, a class of exact-based tests are described for noninferiority binomial trials with type-I error rates that are guaranteed to be bounded by the level of the test. These exact-based procedures do not leverage the conditional distribution of a sufficient statistic, like that of the Fisher's exact test, but rather produces an exact unconditional test using a maximization method \citep{Boschloo:70,McDonald:77,Lehmann:06,Basu:11}. For standard comparative trials ($\delta_0=0$), such exact unconditional tests were shown to be more powerful than Fisher's conditional exact test \citep{Haber:86,Suissa:85}. 

As described in \cite{Wasserstein:19}, it is often not appropriate to just report p-value results; interval estimates of the effect size should also be reported. The unconditional exact method of \cite{Chan:98} does not immediately yield a corresponding confidence interval estimator, but \cite{Chan:99} produce a confidence interval estimator that does leverage the exact unconditional test. However, we show this confidence interval corresponds to a statistical test that is more conservative than the  exact unconditional test of \cite{Chan:98}.  Less conservative confidence interval estimators have been proposed in \cite{Miettinen:85} and \cite{Farrington:90}, but these confidence interval estimators are based on asymptotic distributions and correspond to statistical tests that do note necessarily preserve type-I error rates. 

We introduce a novel confidence interval estimator -- called the exact-corrected estimator -- that is less conservative and more powerful than the Chan \& Zhang interval, but that also corresponds to a statistical test with preserved type-I error. The approach modifies the pivotal quantity used to produce the asymptotic confidence interval in \cite{Miettinen:85}, referred to as the $\delta$-projected Z-score, and tacks on a correction factor to produce a confidence interval that is consistent with the exact test of \cite{Chan:98}. The proposed exact-corrected interval estimator is particularly novel in that it explicitly incorporates the noninferiority margin in the estimator, so different pre-determined noninferiority margins will result in different confidence intervals. 

Next we precisely define the statistical hypotheses being considered for a noninferiority binomial trial and formalize the statistical modeling framework. Then we discuss the implementation of Chan's exact p-value followed by introducing the $\delta-$projected Z-score as the choice of test statistic for building an asymptotic confidence interval, which also serves as the basis of the Chan \& Zhang confidence interval. We then introduce the proposed exact-corrected $\delta-$projected confidence interval method along with its favorable properties in size and power. We finally illustrate those properties through carefully conducted simulations and real data examples.

\section{Methods}

We consider a noninferiority trial with treatment group ($T$) and control/standard-of-care group ($C$) having a binary endpoint representing whether or not an outcome is observed. Let $P_T$ and $P_C$ be the probabilities the outcome is observed, and let $\delta=P_T-P_C$ represent the risk difference. Depending on whether we are considering a positive outcome (e.g. resolution of a disease) or a negative outcome (e.g. cancer recurrence), we will use the following hypotheses for a noninferiority trial with pre-specified noninferiority margin $\delta_0\ge 0$.

\begin{table}[H]
    \caption{Description of the noninferiority hypotheses}
    \label{tab:hypothesis}    
    \centering
\begin{tabular}{c|c|c|l}
     Hypothesis & Positive Outcome & Negative Outcome & Interpretation  \\
     \hline 
     $H_0$ & $\delta\le -\delta_0$ &$\delta \ge \delta_0$ & ``inferior trial''; $T$ is inferior to $C$\\
     $H_1$ & $\delta> -\delta_0$ &$\delta < \delta_0$ & ``non-inferior trial''; $T$ is not inferior to $C$\\
     
\end{tabular}

\end{table}

We'll consider a positive outcome for the rest of this paper.   We model the binary outcomes of the treatment and control groups with the following binomial distributions.
\begin{align*}
&X_T\sim\text{Binomial}(N_T,P_T)\\ 
&X_C\sim\text{Binomial}(N_C,P_C) 
\end{align*}

Under this binomial model, the joint probability for $X_T=x_T$ and $X_C=x_C$ is \begin{align*}
\Pr(X_T=x_T,X_C=x_c\mid P_T, P_C)&= \binom{N_T}{x_T}\binom{N_C}{x_C} (P_T)^{x_T}(1-P_T)^{N_T-x_T} P_C^{x_C} (1-P_C)^{N_C-x_C},
\end{align*}
where $0\le x_T\le N_T$ and $0\le x_C\le N_C$. The likelihood can be reformulated in terms of $P_T$ and $\delta$ with the substitution $P_C=P_T-\delta$.
\begin{eqnarray}
\label{eq:likelihood}
\Pr(X_T=x_T,X_C=x_c\mid P_T, \delta)=\binom{N_T}{x_T}\binom{N_C}{x_C} (P_T)^{x_T}(1-P_T)^{N_T-x_T} (P_T-\delta)^{x_C} (1+\delta-P_T)^{N_C-x_C},
\end{eqnarray}
where $P_T$ and $\delta$ must satisfy the condition
\[
\max(0,\delta)\le P_T \le \min(1,1+\delta).
\]

If $\delta_0=0$, then $X_T+X_C$ is a sufficient statistic for $P_T$ under the null hypothesis, which forms the basis of Fisher's exact test procedure \citep{Fisher:35}. However, for the more general setup of $\delta_0\not=0$, a different approach is needed. One such approach is described in the next section.

\subsection{Chan's exact test}
\label{sec:chan98}

\cite{Chan:98}, and subsequently \cite{Rohmel:99}, proposed an unconditional exact p-value approach based on the maximization/minimax principle. This approach starts with specifying a preorder on the sample space
\[
\Omega=\left\{(x_T,x_C): 0\le x_T\le N_T \quad \text{and}\quad 0\le x_C\le N_C\right\}.
\]
More specifically, given the preorder, we can index and arrange the $n$ elements of $\Omega$ in the following manner
\[
\omega_1\le \omega_2\le \cdots \le \omega_n.
\]
A natural approach to specifying a preorder is to use a test statistic, or more generally any function, that maps the elements of $\Omega$ to $\mathbb{R}$. Let $S$ be a statistic that induces a preorder on $\Omega$. Without loss of generality, suppose that greater values of $S$ favor the alternative hypothesis (otherwise replace $S$ with $-S$).  This statistic will likely depend on $N_T$ and $N_C$ and may also depend on $\delta_0$. 
The so-called exact unconditional p-value, as described in \cite{Chan:98}, \cite{Rohmel:99}, and \cite{Chan:03}, can be expressed as
\begin{equation}
\label{eq:chan.pvalue}
p_S^{\text{exact}}(x_T,x_C)=\sup_{P_T,\delta\le -\delta_0}\Pr\left[S(X_T,X_C)\ge S(x_T,x_C) \mid P_T,\delta\right].
\end{equation}
This approach of uses the maximization/minimax principle \citep{Basu:11,Lehmann:06} to eliminate the nuisance parameters $P_T$ and $\delta$. Taking the supremum over both $P_T$ and $\delta$ can be simplified when the statistic $S$ satisfies the so-called Barnard criteria, stemming from \cite{Barnard:47}, which is given by the following two conditions:
\begin{equation}
\label{eq:Barnard}
\begin{aligned}
    S(x_T,x_C)&\ge S(x_T,x_C+1) \text{ for all }(x_T,x_C),(x_T,x_C+1)\in\Omega\\
    S(x_T,x_C)&\ge S(x_T-1,x_C) \text{ for all }(x_T,x_C),(x_T-1,x_C)\in\Omega
\end{aligned}
\end{equation}
This condition is intuitively clear: for any observed outcome, having one more success in the control or one less success in the treatment should lead to a smaller value of the test statistic. \cite{Rohmel:99} proved that when the inequalities in \eqref{eq:Barnard} are satisfied, the supremum in \eqref{eq:chan.pvalue} occurs on the boundary of $H_0$; i.e. the supremum is the maximum under the restriction $P_T-P_C=-\delta_0$. \cite{Frick:00} generalized this and proved that if either inequality in \eqref{eq:Barnard} is satisfied, then again the supremum in \eqref{eq:chan.pvalue} occurs at the boundary of $H_0$. We shall assume the statistic $S$ satisfies \eqref{eq:Barnard}, thus allowing us to rewrite Equation \eqref{eq:chan.pvalue} as follows.
\begin{equation}
\label{eq:chan.pvalue2}
p_S^{\text{exact}}(x_T,x_C)=\max_{P_T\in[0,1-\delta_0]}\Pr\left[S(X_T,X_C)\ge S(x_T,x_C) \mid P_T, \delta=-\delta_0\right]
\end{equation}
For a general statistic $p$ and specified level $\alpha$, we define the critical region, $\Omega_\alpha$, to be the set of elements of $\Omega$ that reject the null hypothesis based on a level $\alpha$ test; i.e.
\begin{equation}
\label{eq:CR}
\Omega_\alpha=\left\{(x_T,x_C)\text{ such that } p(x_T,x_C)\le \alpha\right\}. 
\end{equation}
Conditioned on $P_T$ and $\delta$, we define the conditional size of $p$ under the null to be
\begin{equation}
\label{eq:size2}
\alpha(p\mid P_T, \delta) = \;\smashoperator{\sum_{(x_T,x_C) \in \Omega_\alpha}}\; \Pr\left[X_T=x_T,X_C=x_C \mid P_T, \delta\right], 
\end{equation}
and the maximal size is defined as
\begin{equation}
\label{eq:size}
\alpha^*(p) = \sup_{P_T,\delta\le -\delta_0} \alpha(p\mid P_T,\delta).
\end{equation}


Following terminology of \cite{Berger:94} and \cite{Rohmel:99a}, we will call $p$ a \emph{valid p-value} if 
\begin{equation}
\label{eq:valid}
\alpha^*(p) \le \alpha \text{ for all }\alpha\in[0,1].
\end{equation}
In the following theorem, we show that $p^{\text{exact}}_S(x_T,x_C)$ is a valid p-value.

\begin{theorem}
  Let $p^{\text{exact}}_S$ be the exact unconditional p-value given in Equation \eqref{eq:chan.pvalue}. Then $p^{\text{exact}}_S$ is a valid p-value; i.e.
  \[
  \alpha^*\left(p^{\text{exact}}\right) \le \alpha \text{ for all }\alpha\in[0,1].
  \]
\end{theorem}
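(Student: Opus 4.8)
The plan is to exploit the fact that, although the critical region $\Omega_\alpha$ in Equation \eqref{eq:CR} is defined only implicitly through $p^{\text{exact}}_S$, it is in fact an upper level set of the statistic $S$. Once this is established, the conditional size collapses from a sum over $\Omega_\alpha$ into a single tail probability of $S$, and that tail probability is then controlled directly by the supremum defining $p^{\text{exact}}_S$ in Equation \eqref{eq:chan.pvalue}.

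First I would show that $p^{\text{exact}}_S$ is nonincreasing as a function of $S$. If $S(\omega)\ge S(\omega')$ then $\{S\ge S(\omega)\}\subseteq\{S\ge S(\omega')\}$, so $\Pr\left[S\ge S(\omega)\mid P_T,\delta\right]\le\Pr\left[S\ge S(\omega')\mid P_T,\delta\right]$ for every $(P_T,\delta)$; taking the supremum over $P_T$ and $\delta\le-\delta_0$ on both sides preserves the inequality and yields $p^{\text{exact}}_S(\omega)\le p^{\text{exact}}_S(\omega')$. Consequently $\Omega_\alpha=\{\omega: p^{\text{exact}}_S(\omega)\le\alpha\}$ is closed upward under $S$: if $\omega\in\Omega_\alpha$ and $S(\omega'')\ge S(\omega)$, then $\omega''\in\Omega_\alpha$.

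Now fix $\alpha\in[0,1]$. If $\Omega_\alpha=\emptyset$ the conditional size is $0\le\alpha$ for every parameter value and the claim is immediate. Otherwise, since $\Omega$ is finite, the minimum $s^\ast=\min\{S(\omega):\omega\in\Omega_\alpha\}$ is attained at some $\omega^\ast\in\Omega_\alpha$, and the upward closure just established gives the exact identification $\Omega_\alpha=\{\omega:S(\omega)\ge S(\omega^\ast)\}$. Hence for any $(P_T,\delta)$ with $\delta\le-\delta_0$,
\begin{align*}
\alpha\left(p^{\text{exact}}_S\mid P_T,\delta\right)
&=\Pr\left[(X_T,X_C)\in\Omega_\alpha\mid P_T,\delta\right]\\
&=\Pr\left[S(X_T,X_C)\ge S(\omega^\ast)\mid P_T,\delta\right].
\end{align*}
This last tail probability is one of the quantities over which the supremum defining $p^{\text{exact}}_S(\omega^\ast)$ is taken, so it is bounded above by $p^{\text{exact}}_S(\omega^\ast)$, which in turn is at most $\alpha$ because $\omega^\ast\in\Omega_\alpha$. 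Taking the supremum over all $(P_T,\delta)$ with $\delta\le-\delta_0$ then gives $\alpha^\ast(p^{\text{exact}}_S)\le\alpha$, as required.

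I expect the main obstacle to be the collapse of $\Omega_\alpha$ to the single upper level set $\{S\ge S(\omega^\ast)\}$. The downstream domination is a one-line consequence of the definition of the supremum, but it only works because the sum over $\Omega_\alpha$ is genuinely a single tail event of $S$ and not a union of disconnected pieces; the monotonicity of $p^{\text{exact}}_S$ in $S$ together with the finiteness of $\Omega$ is exactly what rules out such pieces. I would be careful to treat the empty critical region separately and to note that ties in the preorder induced by $S$ cause no trouble, since the entire argument is phrased through the function $S$ and its attained minimum on $\Omega_\alpha$ rather than through the preorder ranks $\omega_1\le\cdots\le\omega_n$.
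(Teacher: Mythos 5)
Your proof is correct and follows essentially the same route as the paper's: both pick a minimal element $\omega^\ast$ of the critical region $\Omega_\alpha$, relate $\Omega_\alpha$ to the upper level set $\left\{S\ge S(\omega^\ast)\right\}$, and conclude from $p^{\text{exact}}_S(\omega^\ast)\le\alpha$ together with the supremum in the definition of $p^{\text{exact}}_S$. The only difference is that you prove the exact equality $\Omega_\alpha=\left\{S\ge S(\omega^\ast)\right\}$ (via monotonicity of $p^{\text{exact}}_S$ in $S$), whereas the paper only needs, and only proves, the inclusion $\Omega_\alpha\subseteq\left\{S\ge S(\omega^\ast)\right\}$ to bound the size.
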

\begin{proof}
Let $\alpha\in[0,1]$, and let $\Omega_\alpha$ be the critical region for $p^{\text{exact}}$ as defined in Equation \eqref{eq:CR}. Let $(x_T^\alpha,x_C^\alpha)$ be a minimal element of $\Omega_\alpha$; i.e.
\[
S(x_T,x_C)\ge S(x_T^\alpha,x_C^\alpha) \text{ for all }(x_T,x_C)\in\Omega_\alpha.
\]
It is noted that this minimal element may not be unique. Define $\tilde\Omega_\alpha$ as follows
\[
\tilde\Omega_\alpha=\left\{(x_T,x_C)\in\Omega \text{ such that } S(x_T,x_C)\ge S(x_T^\alpha,x_C^\alpha)\right\}.
\]
If $(x_T,x_C)\in\Omega_\alpha$, then $S(x_T,x_C)\ge S(x_T^\alpha,x_C^\alpha)$, as $(x_T^\alpha,x_C^\alpha)$ is a minimal element, which implies $\Omega_\alpha\subset \tilde\Omega_\alpha$. Therefore,
\begin{align*}
\sup_{P_T,\delta\le -\delta_0}\left[\smashoperator[r]{\sum_{(x_T,x_C) \in \Omega_\alpha}}\; \Pr\left[X_T=x_T,X_C=x_C \mid P_T, \delta\right]\right]
&\le \sup_{P_T,\delta\le -\delta_0}\left[\smashoperator[r]{\sum_{(x_T,x_C) \in \tilde\Omega_\alpha}}\; \Pr\left[X_T=x_T,X_C=x_C \mid P_T, \delta\right]\right]\\
&= p_S^{\text{exact}}(x_T^\alpha,x_C^\alpha)\\
&\le \alpha
\end{align*}
\end{proof}
Ultimately, we will propose a confidence interval for $\delta$ that corresponds to $p^{\text{exact}}_S$ for a particular choice of $S$ -- the so-called $\delta$-projected Z-score --  which is described in the next section.

\subsection{$\delta$-projected Z-score}

There are several choices of statistics to define the preorder in Chan's method including 
\cite{Dunnett:77,Santner:80,Blackwelder:82,Miettinen:85,Farrington:90,Chan:99,Rohmel:99}. \cite{Chan:03,Chan:98} is particularly favorable to what he calls the $\delta$-projected Z-score, originally described in \cite{Miettinen:85}, given by
\begin{equation}
\label{eq:Zdelta}
Z_\delta(X_T,X_C)=\frac{\hat{P}_T-\hat{P}_C+\delta}{\hat\sigma_{\delta}}, 
\end{equation}
where
\[
\hat P_T=\frac{X_T}{N_T},\quad
\hat P_C=\frac{X_C}{N_C},\quad
\hat\sigma_{\delta}=\sqrt{\frac{\tilde P_T(1-\tilde P_T)}{N_T} + \frac{\tilde P_C(1-\tilde P_C)}{N_C}},
\]
and $\tilde P_T$ and $\tilde P_C$ represent the maximum likelihood estimators of $P_T$ and $P_C$, respectively, under the null hypothesis constraint $\tilde P_T-\tilde P_C=-\delta$. In particular, in calculating the exact p-value with Equation \eqref{eq:chan.pvalue2}, \cite{Chan:03,Chan:98} advocates the use of the statistic $S(X_T,X_C)=Z_{\delta_0}(X_T,X_C)$. We will simply write $p^{\text{exact}}(x_T,x_C)$ to refer to Chan's exact p-value with this statistic; i.e., 
\begin{equation}
\label{eq:pexact}
p^{\text{exact}}(x_T,x_C) = \max_{P_T\in[0,1-\delta_0]}\Pr\left[Z_{\delta_0}(X_T,X_C)\ge Z_{\delta_0}(x_T,x_C) \mid P_T, \delta=-\delta_0\right].
\end{equation}
\cite{Chan:99a,Chan:03} has provided justification that  $Z_{\delta_0}(X_T,X_C)$  satisfies the Barnard criteria (conditions in Equation \eqref{eq:Barnard}), so as in Equation \eqref{eq:chan.pvalue2}, the maximization in Equation \eqref{eq:pexact} occurs on the boundary of $H_0$. Closed formulas for calculating the restricted maximum likelihood estimators $\tilde P_T$ and $\tilde P_C$ are given in \cite{Miettinen:85} and  \cite{Farrington:90}.  

Asymptotically, $Z_{-\delta}(X_T,X_C)$ has a standard normal distribution, so it can be used as an asymptotic pivotal quantity to form a confidence set for $\delta$ \citep{Miettinen:85}. In particular, if $Z_\delta$ is monotonic in $\delta$, this confidence set would be a contiguous interval. We do not prove monotonicity due to the complexity of the statistic, but monotonicity is easy to confirm for any given set of parameters. We numerically confirmed $Z_\delta(x_T,x_C)$ is monotonically increasing for all tables up to $N_T=100$ and $N_C=100$ with a grid size on $\delta$ of 0.01. Therefore, in the discussion below, we simply assume $Z_\delta(x_T,x_C)$ is monotonically increasing in $\delta$.

The asymptotic $(1-\alpha)$ confidence interval of \cite{Miettinen:85} is $(\delta^\text{asy}_{L,\alpha},\delta^\text{asy}_{U,\alpha})$, where $\delta^\text{asy}_{L,\alpha}$ and $\delta^\text{asy}_{U,\alpha}$ satisfy
\begin{equation}
\label{eq:Z.confint}
  \mathfrak{z}_{1-\alpha/2}= Z_{-\delta^{\text{asy}}_{L,\alpha}}(x_T,x_C) \quad\text{and}\quad
  \mathfrak{z}_{\alpha/2} = Z_{-\delta^{\text{asy}}_{U,\alpha}}(x_T,x_C);
\end{equation}
the notation $\mathfrak{z}_{\alpha}$ represents the $\alpha$ quantile of the standard normal distribution. The following probability coverage calculation validates this asymptotic confidence interval formulation
\begin{align}
\Pr\left[\delta^\text{asy}_{L,\alpha} \le \delta \le \delta^\text{asy}_{U,\alpha}\right] &= \Pr\left[-\delta^\text{asy}_{U,\alpha} \le -\delta \le -\delta^\text{asy}_{L,\alpha}\right]\\
&=\Pr\left[Z_{-\delta^\text{asy}_{U,\alpha}}(X_T,X_C) \le Z_{-\delta}(X_T,X_C) \le Z_{-\delta^\text{asy}_{L,\alpha}}(X_T,X_C)\right]\\
&=\Pr\left[\mathfrak{z}_{\alpha/2}\le Z_{-\delta}(X_T,X_C) \le \mathfrak{z}_{1-\alpha/2}\right]\\
&\approx 1-\alpha
\end{align}

We define the asymptotic p-value to be
\begin{equation}
\label{eq:pasy}
p^{\text{asy}}(x_T,x_C)=\Pr\left[Z\ge Z_{\delta_0}(x_T,x_C)\right] = 1-\Phi\left( Z_{\delta_0}(x_T,x_C)\right),
\end{equation}
where $Z$ represents the standard normal distribution and $\Phi(x)$ is the cumulative distribution function for the standard normal distribution. Theorem \ref{thm:pasy} establishes a connection between $p^{\text{asy}}(x_T,x_C)$ and $\delta^\text{asy}_{L,\alpha}$.

\begin{theorem}
\label{thm:pasy}
Let $\delta_{L,\alpha}^\text{asy}$ be the asymptotic lower bound given in Equation \eqref{eq:Z.confint}, and let $p^\text{asy}(x_T,x_C)$ be the p-value defined in Equation \eqref{eq:pasy}. Assuming $Z_\delta(x_T,x_C)$ is monotonically increasing in $\delta$, then
\[
p^\text{asy}(x_T,x_C)\le \alpha/2 \quad \text{if and only if}\quad \delta_{L,\alpha}^\text{asy}(x_T,x_C)> -\delta_0.
\]
\end{theorem}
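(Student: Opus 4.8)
The plan is to collapse both sides of the claimed equivalence to a single common inequality comparing the observed statistic $Z_{\delta_0}(x_T,x_C)$ against the normal critical value $\mathfrak{z}_{1-\alpha/2}$, and then to chain the two translations. The engine for each translation is monotonicity: both $\Phi$ and, by hypothesis, the map $\delta\mapsto Z_\delta(x_T,x_C)$ are strictly increasing, so every step is a reversible (order-preserving) equivalence and nothing is lost.

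First I would handle the p-value side. Starting from the definition $p^{\text{asy}}(x_T,x_C)=1-\Phi\bigl(Z_{\delta_0}(x_T,x_C)\bigr)$ in Equation \eqref{eq:pasy}, the inequality $p^{\text{asy}}\le\alpha/2$ is equivalent to $\Phi\bigl(Z_{\delta_0}(x_T,x_C)\bigr)\ge 1-\alpha/2$. Applying the order-preserving inverse $\Phi^{-1}$ and recalling $\Phi^{-1}(1-\alpha/2)=\mathfrak{z}_{1-\alpha/2}$ yields the clean restatement $p^{\text{asy}}(x_T,x_C)\le\alpha/2\iff Z_{\delta_0}(x_T,x_C)\ge\mathfrak{z}_{1-\alpha/2}$.

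Next I would handle the confidence-bound side using the defining relation $\mathfrak{z}_{1-\alpha/2}=Z_{-\delta^{\text{asy}}_{L,\alpha}}(x_T,x_C)$ from Equation \eqref{eq:Z.confint}. Under the monotonicity assumption, $\delta\mapsto Z_\delta(x_T,x_C)$ is a strictly increasing bijection onto its range, so this relation identifies $-\delta^{\text{asy}}_{L,\alpha}$ as the unique $\delta$ solving $Z_\delta(x_T,x_C)=\mathfrak{z}_{1-\alpha/2}$. The condition $\delta^{\text{asy}}_{L,\alpha}>-\delta_0$ is the same as $-\delta^{\text{asy}}_{L,\alpha}<\delta_0$, and applying the strictly increasing map to both arguments gives $Z_{-\delta^{\text{asy}}_{L,\alpha}}(x_T,x_C)<Z_{\delta_0}(x_T,x_C)$, i.e.\ $\mathfrak{z}_{1-\alpha/2}<Z_{\delta_0}(x_T,x_C)$. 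Hence $\delta^{\text{asy}}_{L,\alpha}>-\delta_0\iff Z_{\delta_0}(x_T,x_C)>\mathfrak{z}_{1-\alpha/2}$, and combining this with the previous paragraph closes the equivalence.

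The one delicate point — and the step I expect to require the most care — is the boundary configuration $Z_{\delta_0}(x_T,x_C)=\mathfrak{z}_{1-\alpha/2}$, where the p-value translation produces the weak inequality ($p^{\text{asy}}=\alpha/2$, so $p^{\text{asy}}\le\alpha/2$ holds) while the confidence-bound translation produces the equality $\delta^{\text{asy}}_{L,\alpha}=-\delta_0$ rather than the strict inequality. The two conditions thus agree except possibly on this single threshold value. I would resolve it by fixing a consistent convention, reading ``monotonically increasing'' as strictly increasing and observing that exact equality of the discrete statistic $Z_{\delta_0}(x_T,x_C)$ with the continuous critical value $\mathfrak{z}_{1-\alpha/2}$ is nongeneric, so that the strict-versus-weak distinction does not affect the stated equivalence.
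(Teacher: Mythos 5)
Your proposal is correct and follows essentially the same route as the paper: both sides are translated, via monotonicity of $\Phi$ and of $\delta\mapsto Z_\delta(x_T,x_C)$, into the common inequality $Z_{\delta_0}(x_T,x_C)\ge\mathfrak{z}_{1-\alpha/2}$, using Equations \eqref{eq:pasy} and \eqref{eq:Z.confint}. The boundary subtlety you flag (weak versus strict inequality when $Z_{\delta_0}(x_T,x_C)=\mathfrak{z}_{1-\alpha/2}$) is genuine but is not resolved by the paper either---its proof concludes only $\delta^{\text{asy}}_{L,\alpha}\ge-\delta_0$ while the theorem asserts a strict inequality---so your explicit acknowledgment of it is, if anything, more careful than the paper's treatment.
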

\begin{proof}
Suppose $p^\text{asy}(x_T,x_C)\le \alpha/2$. From Equation \eqref{eq:pasy}, we have
\[
1-\Phi\left(Z_{\delta_0}(x_T,x_C)\right)\le \alpha/2,
\]
which implies
\[
Z_{\delta_0}(x_T,x_C) \ge \Phi^{-1}\left(1-\alpha/2\right) = \mathfrak{z}_{1-\alpha/2}=Z_{-\delta^{\text{asy}}_{L,\alpha}}(x_T,x_C),
\]
using Equation \eqref{eq:Z.confint}. From the assumption that $Z_{\delta}(x_T,x_C)$ is monotonically increasing in $\delta$, we have that $\delta^{\text{asy}}_{L,\alpha} \ge -\delta_0$. 
The same steps can be used backwards to show that $\delta_{L,\alpha}^\text{asy}(x_T,x_C)> -\delta_0$ implies that $p^\text{asy}(x_T,x_C)\le \alpha/2$. 
\end{proof}

  We note that the confidence interval $(\delta^{\text{asy}}_{L,\alpha},\delta^{\text{asy}}_{U,\alpha})$, which is based on $Z_\delta$, does not depend on the prespecified value of $\delta_0$. It is also noted that $\alpha^*(p^\text{asy})$ is not necessarily bounded by $\alpha/2$; furthermore, there are many examples in which the type-I error exceeds the level $\alpha/2$ (see Section \ref{sec:examples} below). Hence, $p^\text{asy}$ is not a valid p-value per the definition stated above in Section \ref{sec:chan98}. Next we describe a confidence interval that utilizes Chan's exact p-value with guaranteed probability coverage.

\subsection{Chan \& Zhang confidence interval}
\cite{Chan:99} proposed an ``exact'' two-sided $(1-\alpha)\%$ confidence interval for the risk difference $\delta$. The method is based on inverting two one-sided hypotheses using the $\delta$-projected Z-score $Z_\delta$. We first define the following quantities
\begin{align*}
    P_{L,\delta}(x_T,x_C)&=\max_{P_T\in [0,1-\delta]} \Pr\left[Z_{-\delta}(X_T,X_C)\ge Z_{-\delta}(x_T,x_C)\right]\\ 
    P_{U,\delta}(x_T,x_C)&=\max_{P_T\in [0,1-\delta]} \Pr\left[Z_{-\delta}(X_T,X_C)\le Z_{-\delta}(x_T,x_C)\right].
\end{align*}
In particular, we note that the exact p-value, given in Equation \eqref{eq:pexact}, can be written as 
\[
p^{\text{exact}}(x_T,x_C) = P_{L,-\delta_0}(x_T,x_C).
\]

The Chan \& Zhang confidence interval, denoted $(\delta^{\text{CZ}}_{L,\alpha},\delta^{\text{CZ}}_{U,\alpha})$, is defined by the following expressions.
\begin{align}
    \delta^\text{CZ}_{L,\alpha}(x_T,x_C)&= \inf_{\delta} \left\{\delta:P_{L,\delta}(x_T,x_C) > \alpha/2\right\} \label{eq:CZ.lower} \\
    \delta^\text{CZ}_{U,\alpha}(x_T,x_C)&= \sup_{\delta} \left\{\delta:P_{U,\delta}(x_T,x_C) > \alpha/2\right\} \nonumber
\end{align}
It is noted that this confidence interval, like the asymptotic confidence interval in the previous section, does not depend on the noninferiority margin $\delta_0$. 
We correspond the Chan \& Zhang confidence interval with a Chan \& Zhang p-value defined as
\begin{equation}
\label{eq:p.CZ}
p^\text{CZ}(x_T,x_C)=\max_{\delta\in[-1,-\delta_0]} P_{L,\delta} (x_T,x_C),
\end{equation}
where the correspondence is established in Theorem \ref{thm:chan-zhang}.

\begin{theorem}
\label{thm:chan-zhang}
Let $\delta_{L,\alpha}^\text{CZ}(x_T,x_C)$ be the Chan \& Zhang lower bound given in Equation \eqref{eq:CZ.lower}, and let $p^\text{CZ}(x_T,x_C)$ be the Chan \& Zhang p-value given in Equation \eqref{eq:p.CZ}.  
  \begin{description}
  \item[(i)] $p^\text{CZ}(x_T,x_C)\le \alpha/2$ if and only if $\delta_{L,\alpha}^\text{CZ}(x_T,x_C)> -\delta_0$.
  \item[(ii)]  $p^\text{CZ}(x_T,x_C)$ is bounded below by $p^\text{exact}(x_T,x_C)$. In particular, the Chan \& Zhang p-value is valid and satisfies the following inequalities
\begin{align}
  &p^\text{CZ}(x_T,x_C)\ge p^\text{exact}(x_T,x_C) \label{eq:inequality1}\\
  &\alpha^*(p^\text{CZ}) \le \alpha^*(p^\text{exact}) \le \alpha/2.\label{eq:inequality2}
\end{align}
  \end{description}
\end{theorem}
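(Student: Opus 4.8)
The plan is to prove (i) by translating both sides into statements about the superlevel set $A=\{\delta:P_{L,\delta}(x_T,x_C)>\alpha/2\}$, whose infimum is exactly $\delta^{\text{CZ}}_{L,\alpha}$ by \eqref{eq:CZ.lower}, and then to read off (ii) as an essentially formal consequence of the pointwise bound $p^{\text{CZ}}\ge p^{\text{exact}}$ together with the validity result already established.

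For the reverse direction of (i) I would argue as follows. Assuming $\delta^{\text{CZ}}_{L,\alpha}>-\delta_0$ and using $\delta^{\text{CZ}}_{L,\alpha}=\inf A$, every $\delta<\delta^{\text{CZ}}_{L,\alpha}$ lies strictly below each element of $A$ and hence satisfies $P_{L,\delta}(x_T,x_C)\le\alpha/2$. Since $-\delta_0<\delta^{\text{CZ}}_{L,\alpha}$, this covers every $\delta\in[-1,-\delta_0]$, and taking the maximum over that interval gives $p^{\text{CZ}}(x_T,x_C)\le\alpha/2$ via \eqref{eq:p.CZ}. For the forward direction I would start from $p^{\text{CZ}}(x_T,x_C)\le\alpha/2$, conclude $P_{L,\delta}(x_T,x_C)\le\alpha/2$ for every $\delta\in[-1,-\delta_0]$, so that $A$ meets no point of $[-1,-\delta_0]$ and is contained in the feasible range $(-\delta_0,1]$, which forces $\delta^{\text{CZ}}_{L,\alpha}=\inf A\ge-\delta_0$. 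This parallels the structure of the proof of Theorem~\ref{thm:pasy}.

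The hard part will be upgrading the non-strict $\ge$ to the strict inequality $>$ at the boundary $\delta=-\delta_0$. The argument above leaves open the degenerate possibility that $A\subseteq(-\delta_0,1]$ while $\inf A=-\delta_0$, which requires a sequence $\delta_n\downarrow-\delta_0$ with $P_{L,\delta_n}>\alpha/2$ but $P_{L,-\delta_0}\le\alpha/2$. I would close this gap using right-continuity of $\delta\mapsto P_{L,\delta}(x_T,x_C)$ at $-\delta_0$, since continuity forces $P_{L,\delta_n}\to P_{L,-\delta_0}\le\alpha/2$ and contradicts $P_{L,\delta_n}>\alpha/2$; the non-generic endpoint case $P_{L,-\delta_0}=\alpha/2$ is treated exactly as the corresponding boundary case in Theorem~\ref{thm:pasy}.

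For (ii), inequality \eqref{eq:inequality1} is immediate: because $-\delta_0\in[-1,-\delta_0]$ and $p^{\text{exact}}=P_{L,-\delta_0}$ by the identity stated just before \eqref{eq:CZ.lower}, the maximum in \eqref{eq:p.CZ} dominates the value of its argument at $\delta=-\delta_0$, giving $p^{\text{CZ}}\ge p^{\text{exact}}$. This pointwise bound nests the critical regions of \eqref{eq:CR} taken at threshold $\alpha/2$, namely $\{p^{\text{CZ}}\le\alpha/2\}\subseteq\{p^{\text{exact}}\le\alpha/2\}$. Since the conditional size \eqref{eq:size2} sums nonnegative probabilities over the critical region, the smaller region yields a smaller size at every fixed $(P_T,\delta)$, and taking the supremum in \eqref{eq:size} preserves the inequality, so $\alpha^*(p^{\text{CZ}})\le\alpha^*(p^{\text{exact}})$. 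Finally I would invoke the validity theorem proved earlier (Theorem~1), applied with $S=Z_{\delta_0}$ at threshold $\alpha/2$, to obtain $\alpha^*(p^{\text{exact}})\le\alpha/2$; chaining these bounds yields \eqref{eq:inequality2} and shows $p^{\text{CZ}}$ is itself valid.
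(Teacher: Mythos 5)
Your proposal is correct and follows essentially the same route as the paper's proof: part (i) by unpacking the definitions of the infimum in \eqref{eq:CZ.lower} and the maximum in \eqref{eq:p.CZ}, and part (ii) from the pointwise bound $p^{\text{CZ}}(x_T,x_C)\ge P_{L,-\delta_0}(x_T,x_C)=p^{\text{exact}}(x_T,x_C)$ together with monotonicity of the size in the critical region and Theorem~1. Where you go beyond the paper is the forward direction of (i): the paper's proof simply asserts that $p^{\text{CZ}}(x_T,x_C)\le\alpha/2$ forces the infimum to exceed $-\delta_0$, silently passing over exactly the degenerate case you flag ($A\subseteq(-\delta_0,1]$ yet $\inf A=-\delta_0$); your right-continuity patch is a sensible repair, though be aware that continuity of $\delta\mapsto P_{L,\delta}(x_T,x_C)$ is itself nontrivial (the rejection set $\left\{(y_T,y_C):Z_{-\delta}(y_T,y_C)\ge Z_{-\delta}(x_T,x_C)\right\}$ can change discretely as $\delta$ varies), so that boundary case is no better settled in the paper than in your write-up.
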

\begin{proof}
Suppose 
\[
p^\text{CZ}(x_T,x_C) \triangleq \max_{\delta\in[-1,-\delta_0]} P_{L,\delta}(x_T,x_C) \le \alpha/2.
\]
Then
\[
\delta_{L,\alpha}^{\text{CZ}}(x_T,x_C) = \inf_{\delta} \left\{\delta:P_{L,\delta}(x_T,x_C) > \alpha/2\right\} > -\delta_0,
\]
which establishes $\delta_{L,\alpha}^{\text{CZ}}>-\delta_0$.

Now suppose 
\[
\delta_L^{\text{CZ}}(x_T,x_C) \triangleq \inf_{\delta} \left\{\delta:P_{L,\delta}(x_T,x_C) > \alpha/2\right\} > -\delta_0, 
\]
This implies
\[ 
\max_{\delta\in[-1,-\delta_0]} P_{L,\delta}(x_T,x_C) = p^\text{CZ}(x_T,x_C) < \alpha/2.
\]
This establishes part (i). Part (ii) immediately follows from 
\[
p^\text{CZ}(x_T,x_C) = \max_{\delta\in[-1,-\delta_0]} P_{L,\delta}(x_T,x_C) \ge  P_{L,-\delta_0}(x_T,x_C) = p^\text{exact}(x_T,x_C)
\]
and the definition of $\alpha^*(\cdot)$ provided in Equation \eqref{eq:size}. 
\end{proof}

Equation \eqref{eq:inequality1} in Theorem \ref{thm:chan-zhang} gives $p^\text{CZ}(x_T,x_C) \ge p^\text{exact}(x_T,x_C)$.  This is equivalent to the following statement:
\[
p^\text{CZ}(x_T,x_C) \text{ rejects $H_0$ }\Rightarrow p^\text{exact}(x_T,x_C) \text{ rejects $H_0$.}
\]
So anytime the Chan \& Zhang confidence interval rejects $H_0$, the exact test will also necessarily reject $H_0$, but the converse is not always true. This indicates the exact test will have at least as much statistical power as the test induced by the Chan \& Zhang confidence interval.

Figure \ref{fig:myplot_example} illustrates the relationship between $p^{\text{CZ}}(x_T,x_C)$ and $p^{\text{exact}}(x_T,x_C)$ with a concrete example taking $N_T=9$, $N_C=19$, $x_T=5$, and $x_C=10$. In the figure, the black line is $P_{L,\delta}(x_T,x_C)$ and the red line is $\max_{\delta\in[-1,-\delta]} P_{L,\delta} (x_T,x_C)$. The values for which these lines intersect at $\delta=-\delta_0$ correspond to $p^{\text{exact}}(x_T,x_C)$ and $p^{\text{CZ}}(x_T,x_C)$, respectively, as indicated on the figure with $\delta_0=0.1$. The regions shaded in blue indicate values of $\alpha/2$ and $\delta_0$ that would cause $p^{\text{exact}}(x_T,x_C)$ to reject $H_0$ and $p^{\text{CZ}}(x_T,x_C)$ not to reject $H_0$ for a level $\alpha$ test. Also illustrated on the graphic is the lower bound of the Chan \& Zhang confidence interval for $\alpha=0.5$. The region shaded in orange is not of interest as the corresponding $\alpha$ would be greater than 1 in this region.
\begin{figure}
    \centering
    \includegraphics[width=.75\textwidth]{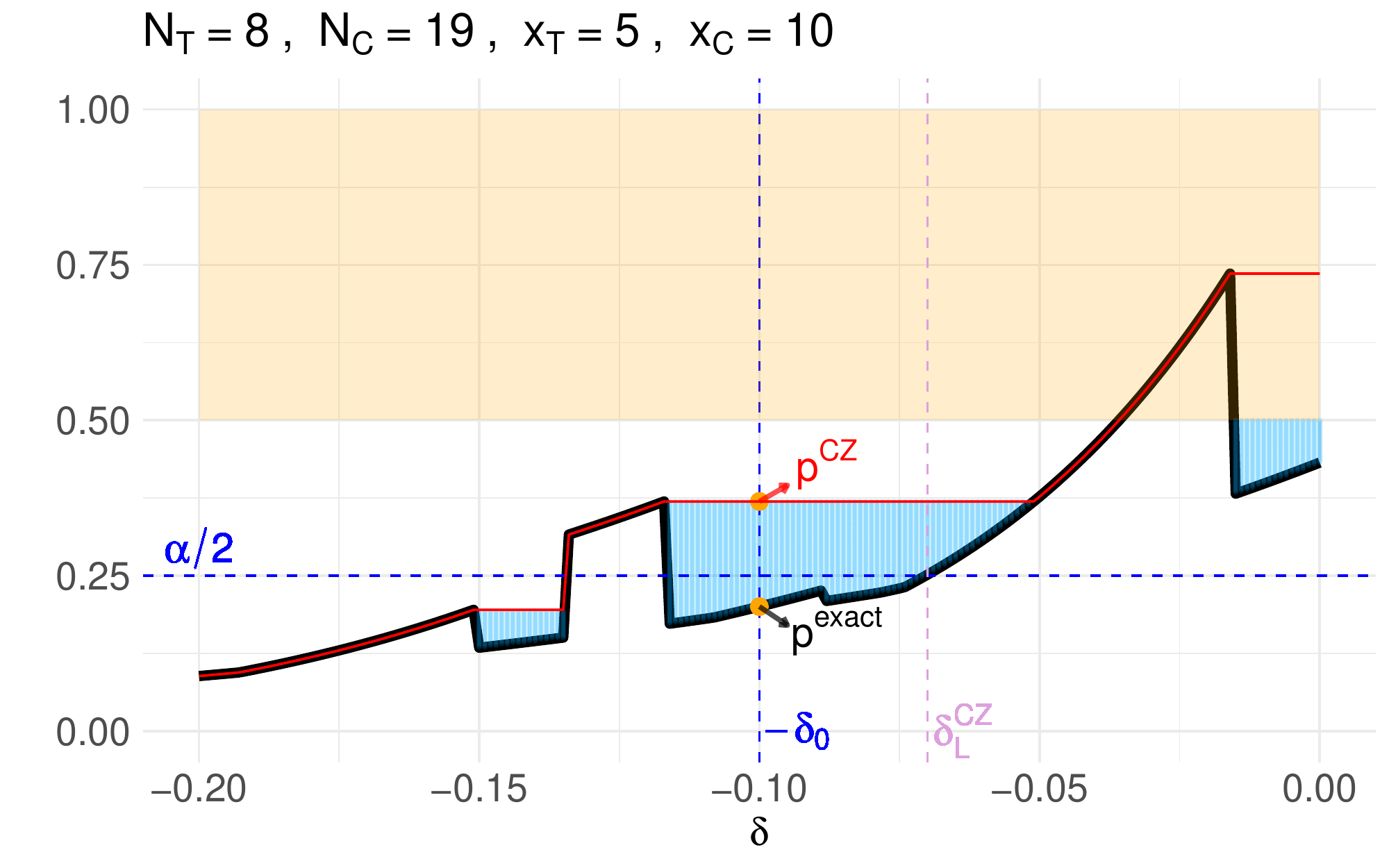}
    \caption{This graphic shows the relationship between $p^{\text{CZ}}(x_T,x_C)$ (red line) and $p^{\text{exact}}(x_T,x_C)$ (black line).  The regions shaded in blue indicate the values of $\alpha/2$ and $\delta_0$ what would cause $p^{\text{exact}}(x_T,x_C)$ to reject $H_0$ and $p^{\text{CZ}}(x_T,x_C)$ not to reject $H_0$ for a level $\alpha$ test. 
    }
    \label{fig:myplot_example}
\end{figure}

As illustrated in Figure \ref{fig:myplot_example}, there are many situations in which the strict inequality $p^\text{CZ}(x_T,x_C) < p^\text{exact}(x_T,x_C)$ holds. Using terminology described in  \cite{Rohmel:99}, we say that $p^\text{exact}$ \emph{strictly dominates} $p^\text{CZ}$. A p-value that is not strictly dominated is called \emph{acceptable}. It is much easier establishing a p-value is not acceptable, like $p^\text{CZ}$, than to prove a given p-value, say $p_S^\text{exact}$, is acceptable. \cite{Frick:00} provides various necessary and sufficient conditions for acceptable p-values. 

Next, we propose a novel ``exact-corrected'' confidence interval, $(\delta^{\text{EC}}_{L,\alpha},\delta^{\text{EC}}_{U,\alpha})$, that corresponds to $p^\text{exact}$; i.e. $p^\text{exact}(x_T,x_C)\le \alpha/2$ if and only if $\delta^{\text{EC}}_{L,\alpha}> -\delta_0$.

\subsection{Exact-corrected $\delta$-projected Z-score}

We consider a modification of the $\delta$-projected $Z$-score, which we call the exact-corrected (EC) $\delta$-projected $Z$-score. This exact-corrected $\delta$-projected $Z$-score, labeled $Z_\delta^{\text{EC}}(X_T,X_C)$, is given by the following expression. 
\begin{equation}
\label{eq:Ztilde}
\begin{aligned}
Z^{\text{EC}}_\delta(X_T,X_C)&=
\frac{\hat{P}_T-\hat{P}_C+\delta}{\hat\sigma_{\delta}}-\frac{\hat\sigma_{\delta_0}}{\hat\sigma_{\delta}} \left(Z_{\delta_0}(X_T,X_C)-\Phi^{-1}\left(1-p^{\text{exact}}(X_T,X_C)\right)\right)\\
&=
\frac{\hat{P}_T-\hat{P}_C+\delta}{\hat\sigma_{\delta}}-\frac{\hat\sigma_{\delta_0}}{\hat\sigma_{\delta}} \left(\Phi^{-1}\left(1-p^{\text{asy}}(X_T,X_C)\right)-\Phi^{-1}\left(1-p^{\text{exact}}(X_T,X_C)\right)\right)\\
&= Z_\delta(X_T,X_C) - \mathrm{EC}_{\delta}(X_T,X_C)
\end{aligned}
\end{equation}
where $\Phi^{-1}$ denotes the quantile function of the standard normal distribution (also called the probit function), and $\mathrm{EC}_{\delta}(X_T,X_C)$ denotes the exact correction term given by
\[
\mathrm{EC}_{\delta}(X_T,X_C)=\frac{\hat\sigma_{\delta_0}}{\hat\sigma_{\delta}} \left(\Phi^{-1}\left(1-p^{\text{asy}}(X_T,X_C)\right)-\Phi^{-1}\left(1-p^{\text{exact}}(X_T,X_C)\right)\right).
\]
In particular, when evaluating $Z^{\text{EC}}_\delta(X_T,X_C)$ at $\delta=\delta_0$, $X_T=x_T$, and $X_C=x_C$, we have
\[
Z^{\text{EC}}_{\delta_0}\left(x_T,x_C\right)=\Phi^{-1}\left(1-p^{\text{exact}}(x_T,x_C)\right).
\]

 In the simulation section, we numerically show the expectation of  $\mathrm{EC}_{\delta}(X_T,X_C)$ tends to zero with increasing sample size over selected values of $P_T$, $P_C$, and $\delta_0$. In the subsequent discussion, we assume $Z_\delta^{\text{EC}}(x_T,x_C)$ is monotonic in $\delta$, so inverting $Z_\delta^{\text{EC}}(x_T,x_C)$ will produce $(1-\alpha)\%$ ``exact-corrected'' confidence interval, $(\delta^{\text{EC}}_{L,\alpha},\delta^\text{EC}_{U,\alpha})$, defined by the following equations

\begin{equation}
\label{eq:confint.EC}
  \mathfrak{z}_{1-\alpha/2}= Z_{-\delta^{\text{EC}}_{L,\alpha}}(X_T,X_C)\quad\text{and}\quad
\mathfrak{z}_{\alpha/2}= Z_{-\delta^{\text{EC}}_{U,\alpha}}(X_T,X_C).
\end{equation}

\begin{theorem}
\label{thm:EC}
Let $\delta_{L,\alpha}^\text{EC}$ be the exact-corrected lower bound given in Equation \eqref{eq:confint.EC}, and let $p^\text{exact}(x_T,x_C)$ be Chan's exact p-value based on the $\delta$-projected $Z$-score as given in Equation \eqref{eq:pexact}. Assuming $Z_\delta^\text{EC}(x_T,x_C)$ is monotonically increasing in $\delta$, then
\[
p^\text{exact}(x_T,x_C)\le \alpha/2 \quad \text{if and only if}\quad \delta_{L,\alpha}^\text{EC}(x_T,x_C)> -\delta_0.
\]
\end{theorem}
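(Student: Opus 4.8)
The plan is to mirror the proof of Theorem \ref{thm:pasy}, replacing the asymptotic p-value with the exact p-value and using the identity $Z^{\text{EC}}_{\delta_0}(x_T,x_C)=\Phi^{-1}\left(1-p^{\text{exact}}(x_T,x_C)\right)$ that was recorded just before the theorem statement from the definition of $Z^{\text{EC}}_\delta$ in Equation \eqref{eq:Ztilde}. This identity is the crucial ingredient: it plays exactly the structural role for $p^{\text{exact}}$ that Equation \eqref{eq:pasy} played for $p^{\text{asy}}$ in Theorem \ref{thm:pasy}, converting the exact p-value into the value of the corrected $Z$-score evaluated at $\delta=\delta_0$. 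Everything downstream is then a matter of applying monotone transformations together with the assumed monotonicity of $Z^{\text{EC}}_\delta(x_T,x_C)$ in $\delta$.

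For the forward implication, I would start from $p^{\text{exact}}(x_T,x_C)\le \alpha/2$, rewrite it as $1-p^{\text{exact}}(x_T,x_C)\ge 1-\alpha/2$, and apply the strictly increasing probit function to obtain $\Phi^{-1}\left(1-p^{\text{exact}}(x_T,x_C)\right)\ge \mathfrak{z}_{1-\alpha/2}$. Substituting the identity on the left gives $Z^{\text{EC}}_{\delta_0}(x_T,x_C)\ge \mathfrak{z}_{1-\alpha/2}$, and then the defining relation for the exact-corrected lower bound, $\mathfrak{z}_{1-\alpha/2}=Z^{\text{EC}}_{-\delta^{\text{EC}}_{L,\alpha}}(x_T,x_C)$ from Equation \eqref{eq:confint.EC}, recasts this as $Z^{\text{EC}}_{\delta_0}(x_T,x_C)\ge Z^{\text{EC}}_{-\delta^{\text{EC}}_{L,\alpha}}(x_T,x_C)$. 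Invoking monotonicity of $Z^{\text{EC}}_\delta$ in $\delta$ then yields $\delta_0\ge -\delta^{\text{EC}}_{L,\alpha}$, i.e. $\delta^{\text{EC}}_{L,\alpha}\ge -\delta_0$. Because each link in this chain is an equivalence, the reverse implication follows by running the same steps backwards, exactly as in Theorem \ref{thm:pasy}.

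I expect the main obstacle to be bookkeeping at the boundary rather than any deep step. The chain of monotone manipulations delivers the non-strict inequality $\delta^{\text{EC}}_{L,\alpha}\ge-\delta_0$ from $p^{\text{exact}}\le\alpha/2$, whereas the statement asserts the strict inequality $\delta^{\text{EC}}_{L,\alpha}>-\delta_0$; reconciling strict versus non-strict at the equality case $p^{\text{exact}}(x_T,x_C)=\alpha/2$ must be handled precisely as it is in the proof of Theorem \ref{thm:pasy}, relying on $\Phi^{-1}$ being strictly increasing together with strict monotonicity of $Z^{\text{EC}}_\delta$ in $\delta$. The only genuinely new content relative to Theorem \ref{thm:pasy} is the identity $Z^{\text{EC}}_{\delta_0}(x_T,x_C)=\Phi^{-1}\left(1-p^{\text{exact}}(x_T,x_C)\right)$, which is a one-line consequence of substituting $Z_{\delta_0}(x_T,x_C)=\Phi^{-1}\left(1-p^{\text{asy}}(x_T,x_C)\right)$ from Equation \eqref{eq:pasy} into the definition of the correction term $\mathrm{EC}_{\delta_0}$; since this has already been verified in the text, the remainder of the argument is essentially a transcription of the earlier proof.
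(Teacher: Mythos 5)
Your proposal is correct and follows essentially the same route as the paper's proof: the paper's key step (its Equation \eqref{eq:pchain}) is just the probabilistic restatement of your identity $Z^{\text{EC}}_{\delta_0}(x_T,x_C)=\Phi^{-1}\left(1-p^{\text{exact}}(x_T,x_C)\right)$, after which both arguments apply $\Phi^{-1}$, invoke the defining relation \eqref{eq:confint.EC} for the lower bound, and conclude via the assumed monotonicity of $Z^{\text{EC}}_\delta$ in $\delta$, with the reverse implication obtained by running the chain backwards. You are even slightly more careful than the paper on the strict-versus-nonstrict boundary issue, which the paper's proof glosses over in exactly the same way as its proof of Theorem \ref{thm:pasy}.
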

\begin{proof}
Note that 
\begin{equation}
\label{eq:pchain}
\Pr\left[Z\ge  Z^{\text{EC}}_{\delta_0}(x_T,x_C)\right] = \Pr\left[Z\ge \Phi^{-1}\left(1-p^{\text{exact}}\right)\right] =1- \Phi \left(\Phi^{-1}\left(1-p^{\text{exact}}(x_T,x_C)\right)\right) = p^\text{exact}(x_T,x_C).
\end{equation}

Suppose $p^\text{exact}(x_T,x_C)\le \alpha/2$. From Equation \eqref{eq:pchain}, we have
\[
\Pr\left[Z\ge  Z^{\text{EC}}_{\delta_0}(x_T,x_C)\right]=1-\Phi\left(Z_{\delta_0}^{\text{EC}}(x_T,x_C)\right)\le \alpha/2.
\]
This implies
\[
Z_{\delta_0}^{\text{EC}}(x_T,x_C) \ge \Phi^{-1}\left(1-\alpha/2\right) = z_{1-\alpha/2}=Z_{-\delta^{\text{EC}}_{L,\alpha}}(x_T,x_C),
\]
using Equation \eqref{eq:confint.EC}. From the assumption $Z^{\text{EC}}_{\delta_0}(x_T,x_C)$ is monotonically increasing in $\delta$, we have that $\delta^{\text{EC}}_{L,\alpha} \ge -\delta_0$. 
The same steps can be used backwards to show that $\delta_{L,\alpha}^\text{EC}(x_T,x_C)> -\delta_0$ implies that $p^\text{exact}(x_T,x_C)\le \alpha/2$. 
\end{proof}

\section{Simulations and Examples}

\subsection{Software and data sharing}

The R package \texttt{EC}, available on github at \url{https://github.com/NourHawila/EC}, allows the user to easily compute the confidence intervals, p-values, and maximal sizes discussed in this paper. All data that support the findings of this research are provided within the paper. 

\subsection{Asymptotic assessments}

Here we present simulations to show the asymptotic behavior of the expected value of $\mathrm{EC}_{\delta}(X_T,X_C)$. Seven different sample sizes of $N=N_T=N_C$, doubling each time from 10 to 640, were considered along with three different values of $\delta_0$ (0, 0.1, 0.2), three different values of $P_T$ (0.3, 0.5, 0.7), and three different values of $P_C$ (0.3, 0.5, 0.7). Each expected value is computed over 10,000 realizations of the data, thus yielding very precise estimates. Figure \ref{fig:myplot_asymptotics} shows $\E\left[\mathrm{EC}_{\delta}(X_T,X_C)\right]\approx 0$ for large values of $N$. This, in turn, suggests that $Z^{\text{EC}}_\delta(X_T,X_C)$ is close to $Z_\delta(X_T,X_C)$ for large $N$.

\begin{figure}
    \centering
    \includegraphics{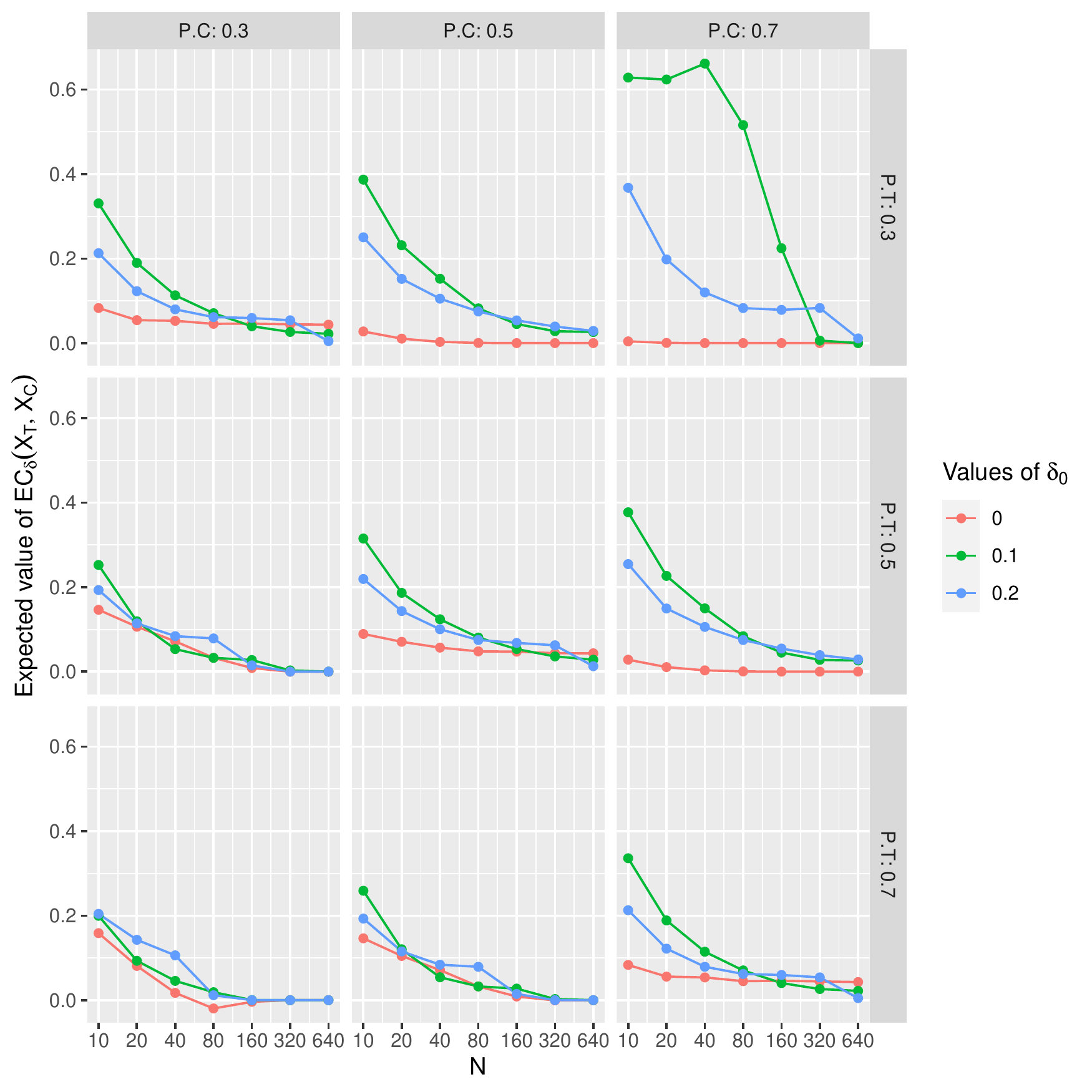}
    \caption{Expected value of $\mathrm{EC}_{\delta}(X_T,X_C)$ is approximated for different combinations of $P_T$, $P_C$, $\delta_0$, and $N=N_T=N_C$.}
    \label{fig:myplot_asymptotics}
\end{figure}

\subsection{Power and size}
\label{sec:power}
The performance of the confidence interval estimators are compared in terms of power and size. 
\begin{itemize}
\item The method ``MN'' \citep{Miettinen:85} corresponds to $(\delta^{\text{asy}}_{L,\alpha},\delta^{\text{asy}}_{U,\alpha})$ 
\item The method ``CZ'' \citep{Chan:99} corresponds to $(\delta^{\text{CZ}}_{L,\alpha},\delta^{\text{CZ}}_{U,\alpha})$.
\item The method ``EC'' corresponds to our proposed ``exact-corrected'' confidence interval estimator $(\delta^{\text{EC}}_{L,\alpha},\delta^{\text{EC}}_{U,\alpha})$.
\end{itemize}

The two examples displayed in Figure \ref{fig:myplot_power} showcase the potential differences in power and size across the three methods.  Once the values of $P_T$, $N_T$, $N_C$, $\delta_0$, and $\alpha$ are determined, the probability of rejecting $H_0$ for different values of $\delta$ is calculated from the likelihoods of the $(N_T+1)(N_C+1)$ tables using Equation \eqref{eq:likelihood}.  Values of $\delta$ that are smaller than $-\delta_0$ correspond to $H_0$ being true, whereas values of $\delta$ that are larger than $-\delta_0$ correspond to $H_1$ being true. The values \texttt{n.AA}, \texttt{n.AR}, and \texttt{n.RR} displayed on the graphic represent the number of the $(N_T+1)(N_C+1)$ tables for which CZ and EC both accept $H_0$ (\texttt{n.AA}), CZ accepts $H_0$ and EC rejects $H_0$ (\texttt{n.AR}), and CZ and EC both reject $H_0$ (\texttt{n.RR}). Note that the term ``accept'' here is used synonymously with ``failed to reject''.

Figure \ref{fig:power1} sets $P_T=0.95$, $N_T=5$, $N_C=11$, $\delta_0=0.03$, and $\alpha=0.7$. In this example there are four tables for which EC rejects $H_0$ but CZ fails to reject $H_0$. This causes EC to have greater power compared to CZ, though both methods have controlled size under $H_0$. We also see the MN method has better power than both CZ and EC, but also rejects $H_0$ with probability greater than $\alpha$ when $H_0$ is true. 

Figure \ref{fig:power2} sets $P_T=0.1$, $N_T=12$, $N_C=5$, $\delta_0=0.33$, and $\alpha=0.1$. In this example, there is only one table for which EC rejects $H_0$ but CZ fails to reject $H_0$, yet this one table occurs with high enough probability to produce a measurable difference in power between the EC and CZ methods. The MN and EC methods reject/accept $H_0$ for all tables (even though they produce different confidence intervals) causing them to have identical power curves. In this example, all methods have type-I error that is bounded by $\alpha$.

\begin{figure}
    \centering
    \begin{subfigure}{.49\textwidth}
    \includegraphics[width=\textwidth]{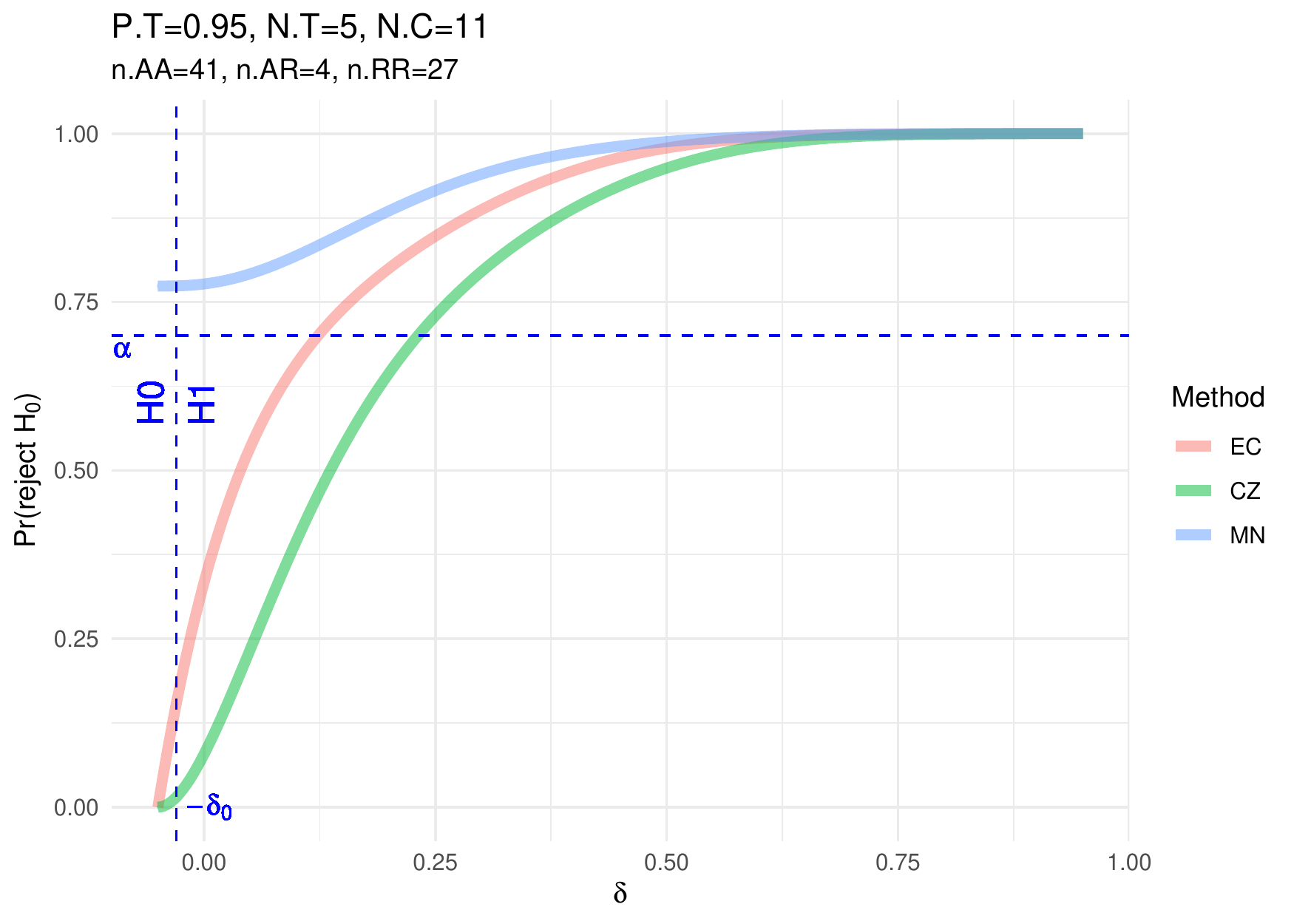}
     \caption{}
    \label{fig:power1}
    \end{subfigure}
    \hfill
    \begin{subfigure}{.49\textwidth}
    \includegraphics[width=\textwidth]{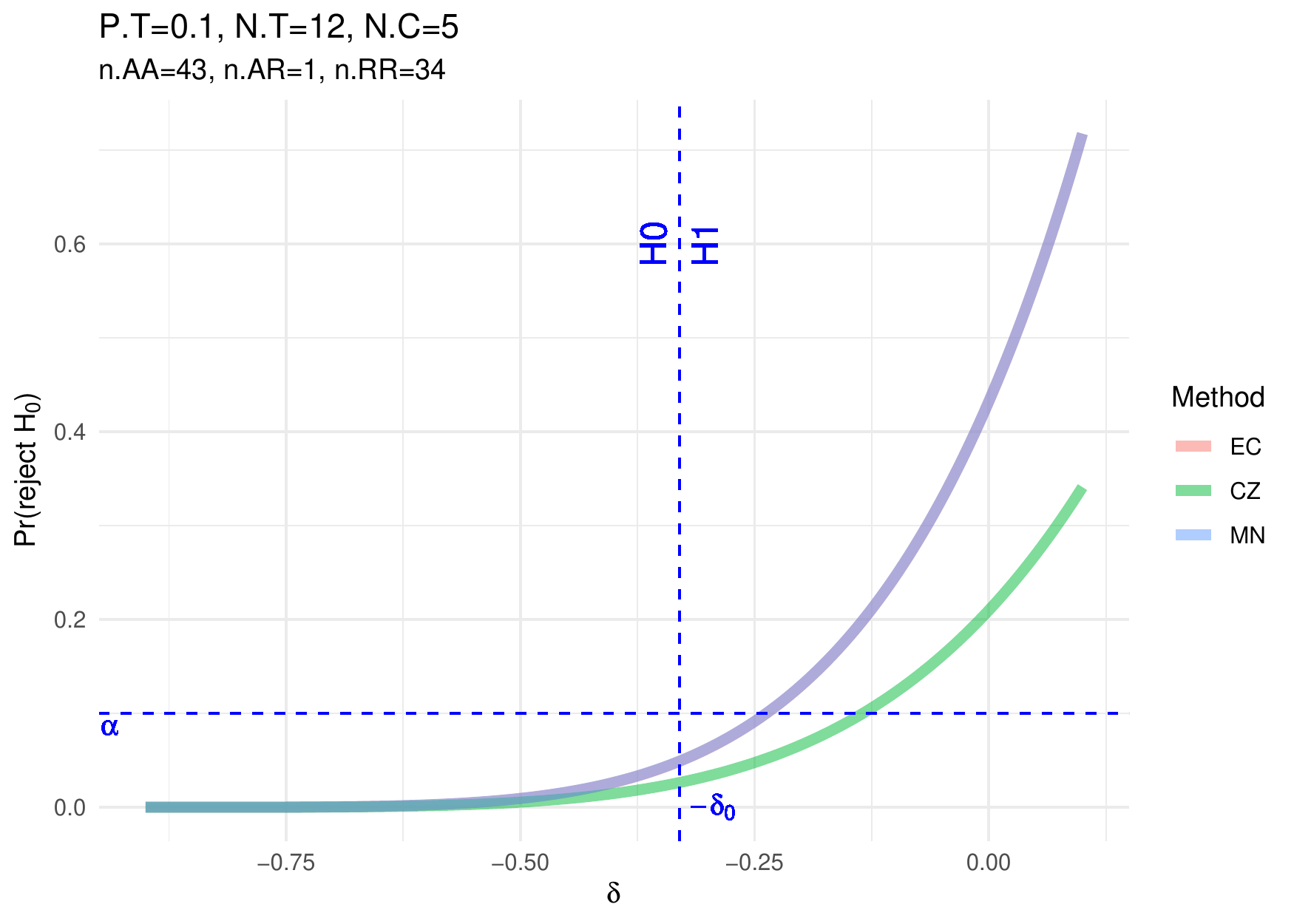}
    \caption{}
    \label{fig:power2}
    \end{subfigure}
    \caption{The probability of rejecting $H_0$ over different values of $\delta$ for the three confidence interval methods EC, CZ, and MN.}
    \label{fig:myplot_power}
\end{figure}
\newpage

\subsection{Data examples}
\label{sec:examples}

In addition to the three previously discussed methods -- EC, CZ, and MN -- we also consider the commonly used Wald's method \citep{Altman:13,Fagerland:15}. The Wald Z-statistic is given by 
\[
Z^\text{Wald}=\frac{\hat{P}_T-\hat{P}_C+\delta_0}{\sqrt{\frac{\hat {P}_T(1-\hat {P}_T)}{N_T} + \frac{\hat {P}_C(1-\hat {P}_C)}{N_C}}},
\]
and the corresponding confidence interval and p-value are given by
\begin{align*}
&\hat{P}_T-\hat{P}_C\pm z_{1-\alpha/2} \sqrt{\frac{\hat {P}_T(1-\hat {P}_T)}{N_T} + \frac{\hat {P}_C(1-\hat {P}_C)}{N_C}}\\
&p^\text{Wald} = 1- \Phi(Z^\text{Wald}).
\end{align*}

We first present three examples in which the EC and CZ confidence intervals produce different hypothesis test decisions. As shown in Theorem \ref{thm:chan-zhang}, $p^\text{exact}\le p^{CZ}$, so if the hypothesis test decisions differ between EC and CZ, it must be that EC rejects the null and CZ fails to reject the null. The parameters for the first example are the same as the parameters presented in Figure \ref{fig:myplot_example}. The second and third examples also show advantages of the EC method over the CZ method but with the more standard $\alpha=0.05$. Confidence intervals for all four methods are presented in Figure \ref{fig:ci}. Table \ref{tab:examples} shows the associated p-values and maximal sizes for the four methods. Consistent with Theorems \ref{thm:pasy}, \ref{thm:chan-zhang}, and \ref{thm:EC}, we see that the associated p-values are shown to declare noninferiority (reject the null with the corresponding p-value being less than $\alpha/2$) if and only if the lower bound of the respective confidence intervals is bigger than $-\delta_0$. 

Theorem \ref{thm:chan-zhang} also establishes the following inequalities on the maximal sizes of the CZ and EC methods:
\[\alpha^*(p^\text{CZ}) \le \alpha^*(p^\text{exact}) \le \alpha/2.\]
Note that maximal size does not depend on specific values of $x_T$ and $x_C$. The maximal size calculations shown in Table \ref{tab:examples} demonstrate this conservativeness of the CZ method over the EC method. Additionally, in each of these three examples, the maximal sizes for both the MN and Wald methods exceed the $\alpha/2$ threshold, which shows that the type-I error rates associated with confidence intervals produced from the MN and Wald methods can be inflated.


\begin{table}[H]
     \caption{Three examples show the EC method is less conservative than the CZ method yet still corresponding to an exact test that controls type-I error. }
     \label{tab:examples}    
     \centering
 \resizebox{\textwidth}{!}{%
 \begin{tabular}{r c c c c      c     c c c c    c     c c c c }
 \hline
  \multicolumn{1}{c}{} & \multicolumn{4}{c}{data parameters} & & \multicolumn{4}{c}{p-values} & & \multicolumn{4}{c}{maximal size}\\
  \cline{2-5} \cline{7-10} \cline{12-15}
       & $x_T/N_T$ & $x_C/N_C$ & $\delta_0$ & $\alpha/2$  & & EC & CZ & MN & Wald  & & EC & CZ & MN & Wald  \\
      \hline 
      Example 1 & 5/8 & 10/19 & 0.10 & 0.25  & & 0.200 & 0.370 & 0.172 & 0.167 & & 0.197 & 0.197 & 0.430 & 0.430\\
      Example 2 & 5/6 & 2/6 & 0.12  & 0.025 & & 0.023 & 0.030 & 0.014 & 0.006  & & 0.022 & 0.012 & 0.030 & 0.464\\
      Example 3 & 7/18 & 5/25 & 0.10 & 0.025 & & 0.024 & 0.027 & 0.018 & 0.020 & & 0.024 & 0.021 & 0.028 & 0.150\\
      \hline
 \end{tabular}
 }
\end{table}

In the next three examples, we compare the different methods from published studies. The first example, originally published in \cite{Lemerle:83} and later reanalyzed in \cite{Rodary:89} and \cite{Chan:98}, considers a randomized trial in childhood nephroblastoma comparing  a neoadjuvant chemotherapy (treatment) to radiation therapy (control) with the outcome of preventing tumor rupture during surgery.  A noninferiority margin is taken to be $\delta_0=0.1$, and the chemotherapy treatment would be considered non-inferior to radiation if $\delta=P_T-P_C>-0.1$. 83 of the 88 chemotherapy subjects had a positive outcome ($\hat P_T=.943$), and 69 of the 76 radiation subjects had a positive outcome ($\hat P_C=.908$). There is pretty strong evidence that neoadjuvant chemotherapy is not inferior to radiation in preventing surgical tumor rupture in this study.

The second example, originally published in \cite{Fries:93} and later reanalyzed in \cite{Chan:98}, considers the protective efficacy against illness of a recombinant protein flu vaccine in response to exposure to the H1N1 virus. The noninferiority margin is set to $\delta_0=0$, so the vaccine would be considered meaningful if $\delta=P_T-P_C>0$. 8 out of 15 subjects who received the vaccine (treatment group) avoided any kind of clinical illness ($\hat P_T=0.533$), whereas only 3 out of 15 subjects who received the placebo (control group) avoided illness ($\hat P_C=0.200$). Even with the small sample size, this study gives pretty strong evidence that the recombinant protein vaccine is more effective than placebo in preventing illness from exposure to H1N1. 

The third example, published in \cite{Kim:13}, considers whether the success rate of subclavian venous catheterization using a neutral shoulder position (treatment group) is not inferior to the often recommended retracted shoulder position (control group). The noninferiority margin is set to $\delta_0=0.05$, so the neutral shoulder position would be non-inferior if $\delta=P_T-P_C>-0.05$. 173 out of 181 subjects in the neutral position had a successful cathereterization ($\hat P_T=0.956$), and 174 out of 181 subjects in the retracted position had a successful cathereterization ($\hat P_C=0.961$). The success rates in this study are quite comparable for the two groups. We note that this study reports a confidence interval for $\delta$ based on Wald's method, which we show produces a decision that is inconsistent with the other three methods. 

The four confidence interval methods for each of these three examples with $\alpha=0.05$ are presented in Figure \ref{fig:ci}. In the Rodary et al.~example, all confidence intervals have a lower bound around $-0.05$ declaring noninferiority for the chemotherapy treatment. In the Fries et al.~example, EC, CZ, and MN intervals report a different decision compared to the Wald interval. The Fisher's exact and Chan's exact p-values for this example are 0.128 and 0.008, respectively, thus leading to a different statistical decision based on a 5\% level test. The lower bound of the EC and CZ intervals match, but the upper bound of the EC interval is somewhat smaller. The Kim et al.~example has relatively large sample sizes, so the EC, CZ, and MN methods produce confidence intervals that are all fairly similar to each other. However, these three methods produce a different conclusion about noninferiority compared to the Wald interval that was reported in the paper. That is, the EC, CZ, and MN intervals fail to conclude noninferiotiy at the 5\% margin, whereas the Wald interval establishes noninferiority. 


\begin{figure}
    \centering
    \includegraphics[width=\textwidth]{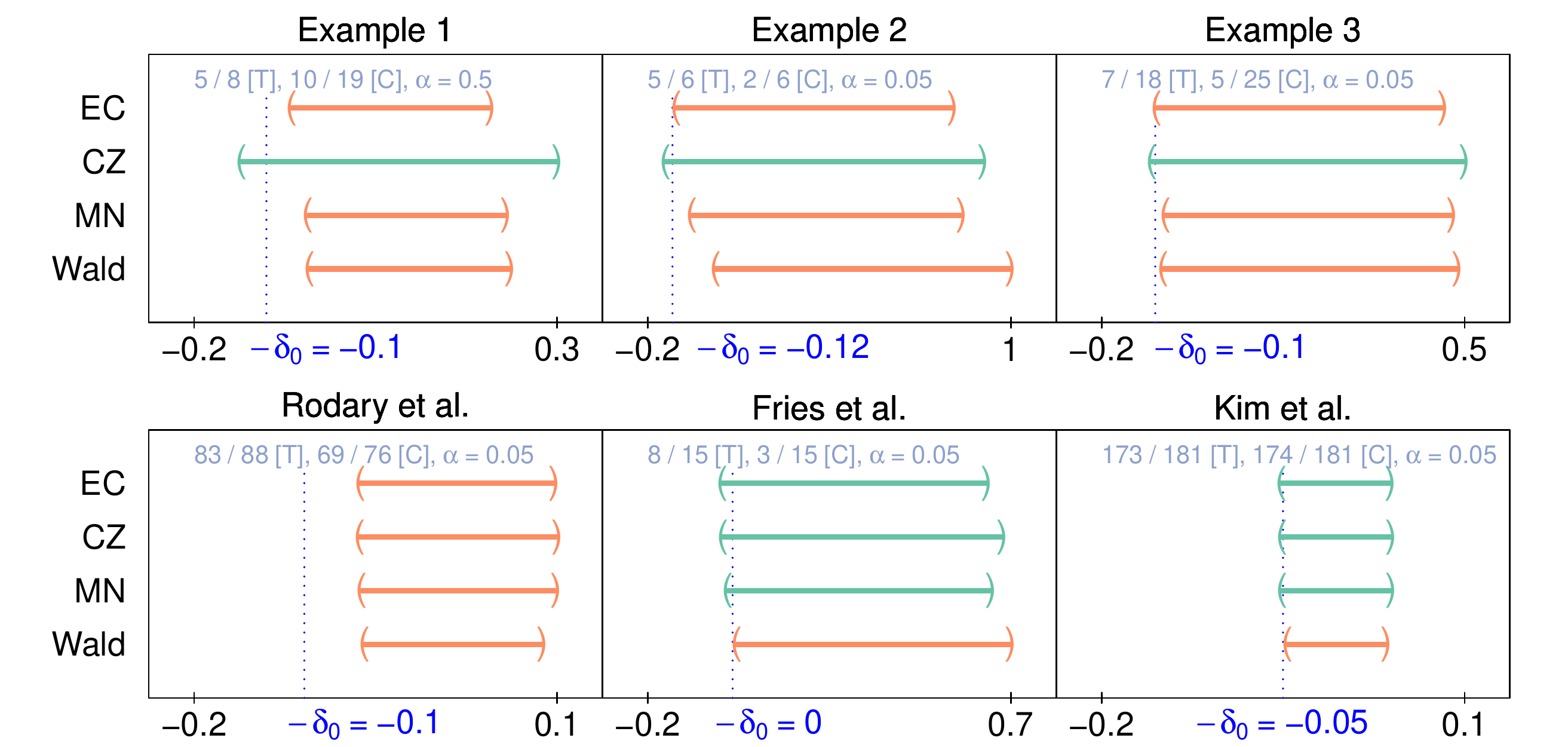}
    \caption{Comparison of the four confidence interval methods over the six data examples presented in Section \ref{sec:examples}.
    }
    \label{fig:ci}
\end{figure}

\section{Discussion}

A novel confidence interval estimator is proposed that bridges the divide between the generally more powerful asymptotic confidence interval of \cite{Miettinen:85} and the less powerful but correctly sized exact confidence interval of \cite{Chan:99} to yield a correctly sized exact confidence interval that is more powerful than the Chan \& Zhang interval. The proposed confidence interval fully leverages the noninferiority trial design by incorporating the noninferiority margin, where as the other methods do not involve the pre-specified noninferiority margin. 

For larger sample sizes the methods all produce similar confidence intervals, but the Chan \& Zhang confidence interval method is substantially more computationally demanding. Moderate sample sizes, such as the examples explored in Section \ref{sec:examples} can take from several minutes to several hours, depending on the level of precision required, whereas the other methods, including the proposed method, will compute the confidence interval within a few seconds.

The differences in results are usually not very dramatic, but with smaller sample sizes and certain values of the parameters $P_T$, $\alpha$, and $\delta_0$, the proposed method can provide a pretty substantial improvement in power as demonstrated in Section \ref{sec:power}. Theorems \ref{thm:chan-zhang} and \ref{thm:EC} also theoretically establish that the proposed exact-corrected confidence interval estimator is at least as powerful as the Chan \& Zhang confidence interval estimator and that they both correspond to valid p-value estimators with controlled size. Therefore, the proposed exact-corrected risk difference confidence interval estimator is recommended for noninferiority binomial trials as it is computationally efficient, preserves the type-I error, and has improved power over the Chan \& Zhang interval.


\bibliographystyle{agsm}
\bibliography{nour}

\end{document}